
\documentclass{article}

\usepackage[utf8]{inputenc}
\usepackage{amsmath}
\usepackage{amsfonts}
\usepackage{amssymb}
\usepackage{amsthm}

\usepackage{nccmath}
\usepackage{makecell}
\usepackage{braket}
\usepackage{bussproofs}

\usepackage[capitalise]{cleveref}
\usepackage{soul} 

\usepackage{graphicx}
\usepackage{multicol}
\usepackage{footmisc}

\usepackage{url}

\usepackage{tikz}
\usetikzlibrary{positioning}

\newcommand{\CS}{\mathsf{CS}}

\newcommand{\Tm}{\mathsf{Tm}}
\newcommand{\atm}{\mathsf{atm}}

\newcommand{\Prop}{\mathsf{Prop}}
\newcommand{\ATm}{\mathsf{ATm}}

\newcommand{\JUP}{\mathsf{JUS}}
\newcommand{\JUPcs}{\mathsf{JUS_{CS}}}
\newcommand{\LJUP}{\mathsf{L_{JUS}}}
\newcommand{\Taut}{\mathsf{Taut}}
\newcommand{\App}{\mathsf{App}}
\newcommand{\Pers}{\mathsf{Pers}}

\newcommand{\Up}{\mathsf{Up}}

\newcommand{\Indep}{\mathsf{Indep}}
\newcommand{\Funct}{\mathsf{Funct}}
\newcommand{\Norm}{\mathsf{Norm}}

\newcommand{\up}{\mathsf{up}}

\newcommand{\model}{\mathcal{M}}
\newcommand{\WMP}{W_{MP}}

\newcommand{\MC}{\mathcal{M}^C}
\newcommand{\rk}{\ell}

\makeatletter
\newcommand{\xLongLeftRightArrow}[2][]{\ext@arrow0055{\LongLeftRightArrowfill@}{#1}{#2}}
\def\LongLeftRightArrowfill@{\arrowfill@\Leftarrow\Relbar\Rightarrow}
\makeatother
\makeatletter
\def\th@plain{%
	\thm@notefont{}
	\itshape 
}

\theoremstyle{plain}
\newtheorem{theorem}{Theorem}
\newtheorem{lemma}[theorem]{Lemma}

\newtheorem{remark}[theorem]{Remark}

\theoremstyle{definition}
\newtheorem{definition}[theorem]{Definition}

\begin{document}

\title{Belief Expansion in Subset Models}
\date{}
\author{Eveline Lehmann \and Thomas Studer}

\maketitle

\begin{abstract}
Subset models provide a new semantics for justifcation logic. The main idea of subset models is that evidence terms are interpreted as sets of possible worlds. A term then justifies a formula if that formula is true in each world of the interpretation of the term.

In this paper, we introduce a belief expansion operator for subset models. We study the main properties of the resulting logic as well as the differences to a previous (symbolic) approach to belief expansion in justification logic.
%
\end{abstract}

\section{Introduction}

Justification logic is  a  variant of  modal logic where the $\Box$-modality is replaced with a familiy of so-called  evidence terms, i.e.~instead of formulas $\Box F$, justification logic features formulas of the form $t:F$ meaning \emph{$F$ is known for reason~$t$}~\cite{ArtFit11SEP,artemovFittingBook,justificationLogic}.

The first justification logic, the Logic of Proofs, has been developed by Artemov~\cite{Artemov1995OperationalModalLogic,Artemov2001ExplicitProvability} in order to provide intuitionistic logic with a classical provability semantics. Thus evidence terms represent proofs in a formal system like Peano arithmetic. By \emph{proof} we mean a Hilbert-style proof, that is a sequence of formulas 
\begin{equation}\label{eq:sequence:1}
F_1,\ldots,F_n 
\end{equation}
where each formula is either an axiom or follows by a rule application from formulas that occur earlier in the sequence.
A justification formula $t:F$ holds in this arithmetical semantics if $F$ occurs in the proof represented by $t$. Observe that $F$ need not be the last formula in the sequence~\eqref{eq:sequence:1}, but can be any formula $F_i$ in it, i.e.~we think of proofs as multi-conclusion proofs~\cite{Artemov2001ExplicitProvability,KSweak}.

After the Logic of Proofs has been introduced, it was observed that terms can not only represent mathematical proofs but evidence in general. Using this interpretation, justification logic provides a versatile framework for epistemic logic~\cite{Art06TCS,Art08RSL,Baltag201449,BucKuzStu11JANCL,BucKuzStu14Realizing,komaogst}.
In order to obtain a semantics of evidence terms that fits this general reading, one has to ignore the order of the sequence~\eqref{eq:sequence:1}. That is evidence terms are interpreted simply as sets of formulas.

This is anticipated in both Mkrtychev models~\cite{Mkr97LFCS} as well as Fitting models~\cite{Fit05APAL}.
The former are used to obtain a decision procedure for justification logic where one of the main steps is to keep track of which (set of) formulas a term justifies, see, e.g.,~\cite{justificationLogic,Stu13JSL}.  The latter provide first epistemic models for justification logic where each possible world is equipped with an evidence function that specifies which terms serve as possible evidence for which (set of) formulas in that world.

Artemov~\cite{Art12SLnonote} conceptually addresses the problem of the logical type of justifications. 
He claims that in the logical setting, justifications are naturally interpreted as sets of formulas. He introduces so-called modular models, which are  based on the basic interpretation of justifications as sets of propositions and the convenience assumption of
\begin{equation}\label{eq:jyb:1}
\text{justification yields belief.}\tag{JYB}
\end{equation}
That means if a term justifies a formula (i.e., the formula belongs to the interpretation of the term), then that formula is believed (i.e., true in all accessible possible worlds)~\cite{KuzStu12AiML}. 
Note that \eqref{eq:jyb:1} has been dropped again in more recent versions of modular models~\cite{artemovFittingBook}.

So let us consider models for justification logic that interpret terms as sets of formulas.
A belief change operator on such a model will operate by changing those sets of formulas (or introducing new sets, etc.). Dynamic epistemic justification logics have been studied, e.g., in~\cite{BucKuzStu11WoLLIC,BucKuzStu14Realizing,KuzStu13LFCS,Ren12Synthesenonote}.
Kuznets and Studer~\cite{KuzStu13LFCS}, in particular,  introduce a justification logic with an operation for belief expansion. Their system satisfies a Ramsey principle as well as minimal change. In fact, their system meets all AGM postulates for belief expansion.

In their model, the belief expansion operation is monotone: belief sets can only get larger, i.e., 
\begin{equation}\label{eq:bigger:1}
\text{belief expansion always only adds new beliefs.} 
\end{equation}
This is fine for first-order beliefs. Indeed, one of the AGM postulates for expansion requires that beliefs are persistent.
However, as we will argue later, this behavior is problematic for higher-order beliefs.

In this paper, we present an alternative approach that behaves better with respect to higher-order beliefs. It uses subset models for justification logics. This is a recently introduced semantics \cite{StuderLehmannSubsetModel2019,ExploringSM} that interprets terms not as sets of formulas but as sets of possible  worlds. There, a formula $t:A$ is true if the interpretation of $t$ is a subset of the truth-set of $A$, i.e., $A$ is true in each world of the interpretation of $t$.
Intuitively, we can read $t:A$ as $A$ is believed and $t$ justifies this belief.
Subset models lead to new operations on terms (like  intersection). Moreover, they provide a natural framework for probabilistic evidence (since the interpretation of a term is a set of possible worlds, we can easily measure it). Hence they support aggregation of probabilistic evidence~\cite{artemov2016onAggregatingPE,StuderLehmannSubsetModel2019}. They also naturally contain non-normal worlds and support paraconsistent reasoning.

It is the aim of this paper to equip subset models with an operation for belief expansion similar to~\cite{KuzStu13LFCS}. The main idea is to introduce justification terms $\up(A)$ such that after a belief expansion with $A$, we have that $A$ is believed and $\up(A)$ (representing the expansion operation on the level of terms) justifies this belief. Semantically, the expansion $A$ is dealt with by intersecting the interpretation of $\up(A)$ with the truth-set of $A$.  
This provides a better approach to belief expansion than~\cite{KuzStu13LFCS} as~\eqref{eq:bigger:1} will hold  for first-order beliefs but it will fail  in general.

The paper is organized as follows. 
In the next section we introduce the language and a deductive system for $\JUP$, a justification logic with belief expansion and subset models.
Then we present its semantics and establish soundness of $\JUP$.
Section~\ref{sec:persistence:1} is concerned with persistence properties of first-order and higher-order beliefs. Further we prove a Ramsey property for  $\JUP$.
Finally, we conclude the paper and mention some further work.

\subsubsection{Acknowledgements.} 
This work was supported by the Swiss National Science Foundation grant \emph{Explicit Reasons}, 200020\_184625.

\section{Syntax}

Given a set of countably many constants $c_i$, countably many variables $x_i$, and countably many atomic propositions $P_i$, terms and formulas of the language of $\JUP$ are defined as follows:
\begin{itemize}
	\item Evidence terms
	\begin{itemize}
		\item Every constant $c_i$ and every variable $x_i$ is an atomic term. If $A$ is a formula, then $\up(A)$ is an atomic term. Every atomic term is a term.
		\item If $s$ and $t$ are terms and $A$ is a formula, then $s\cdot_A t$ is a term.
	\end{itemize}
	\item Formulas
	\begin{itemize}
		\item Every atomic proposition $P_i$ is a formula.
		\item If $A$, $B$, $C$ are formulas, and $t$ is a term, then $\lnot A$, $A\to B$, $t:A$ and $[C]A$ are formulas.
	\end{itemize}
\end{itemize}

The annotation of the application operator may seem a bit odd at first. However, it is often used in dynamic epistemic justification logics, see, e.g.~\cite{KuzStu13LFCS,Ren12Synthesenonote}.

The set of atomic terms is denoted by $\ATm$, the set of all terms is denoted by $\Tm$. The set of atomic propositions is denoted by $\Prop$ and the set of all formulas is denoted by $\LJUP$.
We define the remaining classical connectives, $\bot$, $\land$, $\lor$, and $\leftrightarrow$, as usual making use of the law of double negation and de Morgan's laws.

The intended meaning of the justification term $\up(A)$ is that after an update with $A$, this act of updating serves as justification to believe $A$. Consequently, the justification term $\up(A)$ has no specific meaning before the update with $A$ happens.

\begin{definition}[Set of Atomic Subterms]\label{subterm}
	The set of atomic subterms of a term or formula is inductively defined as follows:
	\begin{itemize}
		\item $\atm(t):=\{t\} $ \quad if $t$ is a constant or a variable
		\item $\atm(\up(C)):=\{\up(C)\} \cup \atm(C)$ 
		\item $\atm(s\cdot_A t):=\atm(s)\cup\atm(t)\cup\atm(A)$
		\item $\atm(P):=\emptyset$ \quad for $P\in\Prop$
		\item $\atm(\lnot A):=\atm(A)$
		\item $\atm(A\to B):=\atm(A)\cup\atm(B)$
		\item $\atm(t:A):=\atm(t)\cup\atm(A)$
		\item $\atm([C]A):=\atm(A) \cup \atm(C)$. 
	\end{itemize}
\end{definition}

\begin{definition}
We call a formula $A$ \emph{up-independent} if for each subformula $[C]B$ of $A$ we have that $\up(C) \notin \atm(B)$.
\end{definition}

Using Definition \ref{subterm}, we can control that updates and justifications are independent. This is of importance to distinguish cases where updates change the meaning of justifications and corresponding formulas from cases where the update does not affect the meaning of a formula.

We will use the following notation: $\tau$ denotes a finite sequence of formulas and $\epsilon$ denotes the empty sequence. Given a sequence $\tau = C_1, \ldots C_n$ and a formula $A$, the formula $[\tau]A$ is defined by
\[
[\tau]A = [C_1]\ldots[C_n]A \text{ if $n > 0$} \quad\text{and}\quad [\epsilon]A := A.
\]

The logic $\JUP$ has the following axioms and rules where $\tau$ is a finite (possibly empty) sequence of formulas:
\begin{align}
1.\quad& [\tau]A \quad\text{for all propositional tautologies $A$}&\tag{$\Taut$}\\
2.\quad& [\tau](t:(A\to B)\land s:A\leftrightarrow t\cdot_A s:B)&\tag{$\App$}\\
3. \quad& [\tau]([C]A\leftrightarrow A) \quad\text{if $[C]A$ is up-independent}&\tag{$\Indep$}\\ 
4.\quad& [\tau]([C]\neg A\leftrightarrow\neg[C]A)&\tag{$\Funct$}\\
5.\quad& [\tau]([C](A\to B)\leftrightarrow([C]A\to[C]B))&\tag{$\Norm$}\\
6.\quad& [\tau][A]\up(A):A&\tag{$\Up$}\\
7.\quad& [\tau](\up(A):B \to [A]\up(A):B)&\tag{$\Pers$}
\end{align}

A constant specification $\CS$ for $\JUP$ is any subset
\[
\begin{split}
\CS\subseteq\{(c, \ [\tau_1]c_1 :  [\tau_2]c_2 :& \ldots : [\tau_n]c_n:  A) \enspace| \\
	&n\geq 0,  \text{ $c, c_1, \ldots, c_n$ are constants},\\
	& \text{$\tau_1, \ldots, \tau_n$ are sequences of formulas},\\
	& \text{$A$ is an axiom of $\JUP$}\}
\end{split}
\]

$\JUPcs$ denotes the logic $\JUP$ with the constant specification $\CS$. 
The rules of $\JUPcs$ are Modus Ponens and Axiom Necessitation:
\[
	\AxiomC{$A$}
	\AxiomC{$A\to B$}
	\RightLabel{(MP)}
	\BinaryInfC{$B$}
	\DisplayProof
\qquad\qquad
	\AxiomC{}
	\RightLabel{(AN)\quad if $(c, A)\in\CS$}
	\UnaryInfC{$[\tau]c:A$}
\DisplayProof
\]

Before establishing some basic properties of $\JUPcs$, let us briefly discuss its axioms.
The direction from left to right in axiom $(\App)$ provides an internalization of modus ponens. 
Because of the annotated application operator, we also have the other direction, which is a minimality condition. It states that a justification represented by a complex term can only come from an application of modus ponens.

Axiom $(\Indep)$ roughly states that an update with a formula $C$ can only affect the truth of formulas that contain certain update terms.

Axiom $(\Funct)$ formalizes that updates are functional, i.e.~the result of an update is uniquely determined.

Axiom $(\Norm)$, together with Lemma~\ref{l:upNec:1}, states that $[C]$ is a normal modal operator for each formula $C$.

Axiom $(\Up)$ states that after a belief expansion with $A$, the formula~$A$ is indeed believed and $\up(A)$ justifies that belief.

Axiom $(\Pers)$ is a simple persistency property of update terms.

\begin{definition}\label{def:cs_axiomatocally_appropriate}
A constant specification $\CS$ is called \emph{axiomatically appropriate} if 
\begin{enumerate}
\item
for each axiom $A$, there is a constant $c$ with $(c,A) \in \CS$ and
\item 
for any formula $A$ and any constant $c$, if $(c,A) \in \CS$, then for each sequence of formulas $\tau$ there exists a constant $d$ with  
\[
(d,[\tau]c:A) \in \CS.
\]
\end{enumerate}
\end{definition}

The first clause in the previous definition is the usual condition for an axiomatically appropriate constant specification (when the language includes the !-operation). Here we also need the second clause in order to have the following two lemmas, which establish that necessitation is admissible in $\JUPcs$. Both are proved by induction on the length of derivations.

\begin{lemma}\label{l:upNec:1}
Let $\CS$ be an arbitrary constant specification.
For all formulas $A$ and $C$ we have that if $A$ is provable in $\JUPcs$, then  $[C]A$ is provable in $\JUPcs$.
\end{lemma}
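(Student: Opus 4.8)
The plan is to argue by induction on the length of the derivation of $A$ in $\JUPcs$, showing in each case that $[C]A$ is again derivable. Since $C$ is fixed throughout, the structural fact I would exploit is that every axiom schema of $\JUP$ is stated with a leading modality block $[\tau]$ for an arbitrary finite sequence $\tau$, and that by the definition of $[\tau]A$ we have $[C][\tau]A = [C,\tau]A$, where $C,\tau$ denotes the sequence obtained by prepending $C$. Hence prefixing a single modality $[C]$ to any axiom instance simply lengthens the prefix $\tau$ to $C,\tau$ and yields another instance of the same schema.

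For the base cases, suppose first that $A$ is an instance of one of the axioms $(\Taut)$, $(\App)$, $(\Indep)$, $(\Funct)$, $(\Norm)$, $(\Up)$ or $(\Pers)$. By the observation above, $[C]A$ is then an instance of the very same schema with the extended prefix, so it is an axiom and hence provable. The only point requiring a second look is the side condition of $(\Indep)$: an instance has the form $[\tau]([D]B \leftrightarrow B)$ subject to $[D]B$ being up-independent. Prepending $[C]$ does not touch the inner formula $[D]B$, so up-independence is preserved and $[C,\tau]([D]B \leftrightarrow B)$ is again a legitimate $(\Indep)$-instance. Suppose next that $A$ is obtained by (AN), so that $A = [\tau]c:B$ with $(c,B) \in \CS$. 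Then $[C]A = [C,\tau]c:B$ is exactly the conclusion of (AN) applied to the same pair $(c,B) \in \CS$ with prefix $C,\tau$, hence provable. Note that no assumption on $\CS$ enters here, which matches the hypothesis that $\CS$ is arbitrary.

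For the induction step, suppose $A$ is obtained by (MP) from premises $D$ and $D \to A$, both derivable by shorter derivations. The induction hypothesis yields derivations of $[C]D$ and of $[C](D \to A)$. I would then invoke the normality axiom $(\Norm)$ with empty outer prefix, namely $[C](D \to A) \leftrightarrow ([C]D \to [C]A)$; combining its left-to-right direction with $[C](D \to A)$ by (MP) gives $[C]D \to [C]A$, and a final (MP) with $[C]D$ delivers $[C]A$.

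I do not anticipate a genuine obstacle: the argument is a routine induction whose whole force comes from the design choice that all axioms carry an arbitrary $[\tau]$-prefix and that $(\Norm)$ makes each $[C]$ behave like a normal box. The only places calling for care are verifying that the side condition of $(\Indep)$ survives the prefixing step, and that the application of $(\Norm)$ uses precisely the instance with the matching modality $[C]$.
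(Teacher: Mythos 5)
Your proof is correct and matches the paper's approach exactly: the paper proves this lemma by induction on the length of derivations, which is precisely the argument you spell out (the paper omits the details you supply). Your observations that the $[\tau]$-prefixes in the axiom schemas and in the (AN) rule absorb the extra $[C]$, that the up-independence side condition of $(\Indep)$ is untouched by prefixing, and that $(\Norm)$ handles the (MP) step without any assumption on $\CS$, are all accurate.
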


\begin{lemma}[Constructive Necessitation]
Let $\CS$ be an axiomatically appriopriate constant specification.
For all formulas $A$ we have that if $A$ is provable in $\JUPcs$, then there exists a term $t$ such that $t:A$ is provable in $\JUPcs$.
\end{lemma}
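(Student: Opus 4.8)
The plan is to prove Constructive Necessitation by induction on the length of the derivation of $A$ in $\JUPcs$, exhibiting at each step a term $t$ together with a derivation of $t:A$. Since $\CS$ is axiomatically appropriate, clause~1 of \Cref{def:cs_axiomatocally_appropriate} guarantees that every axiom gets internalized by a constant, and clause~2 supplies the nested constants needed to push necessitation through applications of $(AN)$ itself.

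First I would treat the base cases. If $A$ is an axiom of $\JUPcs$, then by clause~1 of axiomatic appropriateness there is a constant $c$ with $(c,A)\in\CS$, so $(AN)$ with $\tau=\epsilon$ yields $c:A$, and we take $t:=c$. If $A$ was obtained by $(AN)$, say $A=[\tau]c:A'$ with $(c,A')\in\CS$, then clause~2 provides a constant $d$ with $(d,[\tau]c:A')\in\CS$, i.e.\ $(d,A)\in\CS$; applying $(AN)$ again gives $d:A$, so $t:=d$ works. This is exactly the point where the second clause of axiomatic appropriateness is essential, and it is the only genuinely new ingredient relative to the standard argument.

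Next I would handle the $(MP)$ step, which is the inductive case. Suppose $A$ follows from $B$ and $B\to A$ by modus ponens. By the induction hypothesis there are terms $s$ and $u$ with $\JUPcs\vdash s:(B\to A)$ and $\JUPcs\vdash u:B$. The left-to-right direction of axiom $(\App)$, instantiated with $\tau=\epsilon$, gives $s:(B\to A)\land u:B \to s\cdot_B u:A$, and two applications of propositional reasoning together with $(MP)$ then yield $s\cdot_B u:A$. Hence we take $t:=s\cdot_B u$, which completes the induction.

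The main obstacle, such as it is, lies in the $(AN)$ base case rather than in any calculation: one must observe that the object produced by a single step of Axiom Necessitation, namely $[\tau]c:A'$, is itself of the shape that clause~2 of \Cref{def:cs_axiomatocally_appropriate} is designed to internalize, so that a fresh constant $d$ can justify it in one further $(AN)$ step. The $(MP)$ case is entirely routine given the internalized modus ponens built into $(\App)$, and the annotation on the application operator causes no difficulty here since the index $B$ is simply read off from the antecedent formula.
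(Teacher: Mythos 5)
Your proof is correct and follows exactly the route the paper indicates, namely induction on the length of the derivation in $\JUPcs$, with clause~2 of \Cref{def:cs_axiomatocally_appropriate} doing precisely the work you identify in the $(AN)$ case and the annotated axiom $(\App)$ handling $(MP)$. The paper only states that the lemma is ``proved by induction on the length of derivations,'' and your argument is the standard and correct elaboration of that sketch.
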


We will also need the following auxiliary lemma.
\begin{lemma}\label{l:aux:1}
Let $\CS$ be an arbitrary constant specification. For all terms $s,t$ and all formulas $A,B,C$, $\JUPcs$ proves:
\[
[C]t:(A\to B) \land [C]s:A \ \leftrightarrow\  [C]t\cdot_A s:B
\]
\end{lemma}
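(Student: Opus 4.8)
The plan is to derive the statement directly from the application axiom $(\App)$ by pushing the modality $[C]$ through the Boolean structure of the biconditional. Instantiating $(\App)$ with the one-element sequence $\tau = C$ yields
\[
[C]\big((t:(A\to B)\land s:A) \leftrightarrow (t\cdot_A s:B)\big),
\]
and the goal is exactly the formula obtained from this by distributing $[C]$ over $\leftrightarrow$ and then over $\land$. So the whole argument reduces to showing that $[C]$ commutes with the propositional connectives.

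First I would record the two basic distribution facts that come straight from the axioms, taking the empty prefix $\tau=\epsilon$: axiom $(\Funct)$ gives $[C]\neg F \leftrightarrow \neg[C]F$, and axiom $(\Norm)$ gives $[C](F\to G) \leftrightarrow ([C]F \to [C]G)$, for all formulas $F,G$. Since $\land$ and $\leftrightarrow$ are abbreviations built from $\neg$ and $\to$, these two facts suffice to derive, by unfolding the definitions and reasoning propositionally (i.e.\ using instances of $(\Taut)$ together with (MP)),
\[
[C](F\land G) \leftrightarrow ([C]F \land [C]G) \quad\text{and}\quad [C](F\leftrightarrow G) \leftrightarrow ([C]F \leftrightarrow [C]G).
\]
For the first, one writes $F\land G$ as $\neg(F\to\neg G)$, applies $(\Funct)$ on the outer negation, $(\Norm)$ on the implication, and $(\Funct)$ again on $\neg G$, then collapses the result propositionally; the second follows by treating $F\leftrightarrow G$ as $(F\to G)\land(G\to F)$ and combining the $\land$-case with $(\Norm)$.

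With these in hand the derivation is short. Applying the $\leftrightarrow$-distribution fact to the instance of $(\App)$ above turns it into
\[
[C](t:(A\to B)\land s:A) \ \leftrightarrow\ [C](t\cdot_A s:B),
\]
and applying the $\land$-distribution fact to the left-hand conjunction rewrites $[C](t:(A\to B)\land s:A)$ as $[C]t:(A\to B) \land [C]s:A$. Chaining these biconditionals, again by propositional reasoning, gives precisely
\[
[C]t:(A\to B) \land [C]s:A \ \leftrightarrow\ [C]t\cdot_A s:B,
\]
as required.

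I expect the only real work to be the preliminary step: verifying that $[C]$ distributes over the defined connectives $\land$ and $\leftrightarrow$. This is where care is needed, since it depends on the exact definitions chosen for the abbreviations and requires patiently chaining the $(\Funct)$ and $(\Norm)$ instances under propositional reasoning. Everything after that is a mechanical substitution into the instance of $(\App)$, and notably neither necessitation (Lemma~\ref{l:upNec:1}) nor any constant-specification assumption is needed, since all the intermediate biconditionals are themselves axiom instances or propositional consequences.
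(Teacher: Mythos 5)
Your proof is correct, and it is the evident intended argument: the paper states Lemma~\ref{l:aux:1} without proof, and the natural derivation is exactly yours---instantiate $(\App)$ with the one-element prefix $\tau = C$, then distribute $[C]$ over the defined connectives $\land$ and $\leftrightarrow$ using the $\tau=\epsilon$ instances of $(\Funct)$ and $(\Norm)$ together with $(\Taut)$ and (MP). Your closing observations are also accurate: since $(\App)$ already carries the $[\tau]$ prefix, no appeal to Lemma~\ref{l:upNec:1} is needed, and the derivation uses no (AN) inferences, so it goes through for an arbitrary constant specification, matching the lemma's hypothesis.
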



\section{Semantics}

Now we are going to introduce subset models for the logic  $\JUPcs$. In order to define a valuation function on these models, we will need the following  measure for the length of formulas.
\begin{definition}[Length]
The \emph{length} of a term or formula is inductively defined by:
\begin{align*}
&\rk(t):=1 \text{ if $t \in \ATm$}  && \rk(s \cdot_A t):=\rk(s)+\rk(t)+\rk(A)+1 \\
&\rk(P):=1 \text{ if $P \in \Prop$}  && \rk(A \to B):=\rk(A)+\rk(B)+1  \\
&\rk(\lnot A):=\rk(A)+1  && \rk(t:A):=\rk(t)+\rk(A)+1 \\
&\rk([B]A):= \rk(B)+\rk(A)+1 \hspace{-2em}
\end{align*}

\end{definition}

\begin{definition}[Subset Model] \label{def:subset_model} 
We define a \emph{subset model} \[\mathcal{M}=(W, W_0, V_1, V_0, E)\] for $\JUP$ by:
	\begin{itemize}
		\item $W$ is a set of objects called worlds.
		\item $W_0\subseteq W$, $W_0\neq\emptyset$.
		\item $V_1: (W \setminus W_0) \times\LJUP\to \{0, 1\}$.
		\item $V_0:W_0 \times\Prop\to \{0, 1\}$.
		\item $E: W\times\Tm \to \mathcal{P}(W)$ such that for  $\omega\in W_0$ and all  $A\in\LJUP$:
		\[
		E(\omega, s\cdot_A t)\subseteq E(\omega, s)\cap E(\omega, t)\cap\WMP,
		\]
		where $\WMP$ is the set of all deductively closed worlds, formally given by
	\begin{align*}
	\WMP&:= W_0  \cup \WMP^1 \quad\text{where}\\
	\WMP^1&:=\{\omega\in W\setminus W_0\ | \   \\
			&\forall A, B\in\LJUP\ ((V_1(\omega, A)=1\text{ and }
	V_1(\omega, A\to B)=1) \\
	&\qquad\qquad\qquad\quad\text{implies } V_1(\omega, B)=1)\}.
	\end{align*}		
	\end{itemize}	
\end{definition}
We call $W_0$ the set of \emph{normal} worlds. The worlds in $W \setminus W_0$ are called \emph{non-normal} worlds. 
$\WMP$ denotes the set of worlds where the valuation function (see the following definition) is closed under modus ponens.

In normal worlds, the laws of classical logic hold, whereas non-normal worlds may behave arbitrarily. In a non-normal world we may have that both $P$ and $\lnot P$  hold  or we may have that neither~$P$ nor $\lnot P$ holds.
We need non-normal worlds to take care of the hyperintensional aspects of justification logic. In particular, we must be able to model that constants do not justify all axioms. In normal worlds, all axioms hold. Thus we need non-normal worlds to make axioms false.

Let $\mathcal{M}=(W, W_0, V_1, V_0, E)$ be a subset model.
We define the \emph{valuation function} $V_\mathcal{M}$ for $\mathcal{M}$ and the \emph{updated model} $\mathcal{M}^C$ for any formula $C$ simultaneously. For $V_\mathcal{M}$, we  often drop the subscript $\mathcal{M}$ if it is clear from the context.

We define $V : W \times \LJUP\to \{0, 1\}$ as follows by induction on the length of formulas:
\begin{enumerate}
\item Case $\omega \in W \setminus W_0$. We set $V(\omega,F) := V_1(\omega,F)$;
\item Case $\omega \in W_0$. We define $V$ inductively by:
\begin{enumerate}
\item $V(\omega,P) := V_0(\omega,P)$ for $P \in \Prop$;
\item $V(\omega,t:F) := 1$ if{f}   $E(\omega,t) \subseteq \{\upsilon \in W \ |\ V(\omega,F) =1\}$ for $t \in \ATm$;
\item $V(\omega, s\cdot_F r:G)=1$ if{f} 
	$V(\omega, s:(F\to G))=1$ and $V(\omega, r:F)=1$; 
\item $V(\omega, \lnot F)=1$  if{f} $V(\omega, F)=0$;
\item $V(\omega, F \to G)=1$  if{f} $V(\omega, F)=0$ or $V(\omega, G)=1$;
\item $V(\omega, [C]F)=1$  if{f} $V_{\mathcal{M}^C}(\omega, F)=1$ where $V_{\mathcal{M}^C}$ is the valuation function for the updated model $\mathcal{M}^C$.
\end{enumerate}
\end{enumerate}
The following notation for the truth set of $F$ will be convenient:
\[
[[F]]_\model := \{\upsilon \in W \ |\ V_\mathcal{M}(\upsilon, F) = 1\}.
\]

The updated model $\MC = (W^{\MC}, W_0^{\MC}, V_1^{\MC}, V_0^{\MC}, E^{\MC})$ is given by:
\[
W^{\MC} := W \qquad W_0^{\MC}:=W_0 \qquad   V_1^{\MC} := V_1 \qquad V_0^{\MC} := V_0
\]
and
\[
 E^{\MC}(\omega,t) := \begin{cases}
 							E^{\model}(\omega,t) \cap [[C]]_{\MC} & \text{if $\omega \in W_0$ and $t=\up(C)$}\\
 							E^{\model}(\omega,t) & \text{otherwise}
 						\end{cases}		
\]

The valuation function for complex terms is well-defined.
\begin{lemma}\label{app_is-subset} 
	For a subset model $\model$ with a world $\omega\in W_0$, $s, t\in \Tm$, $A, B\in\LJUP$, we find that
	\[
	V(\omega, s\cdot_A t:B)=1\quad\text{  implies }\quad E(\omega, s\cdot_A t)\subseteq[[B]]_\model.\]
\end{lemma}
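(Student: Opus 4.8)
The plan is to prove a slightly stronger statement and then read off the lemma as a special case. The strengthened claim is: for every $\omega \in W_0$, every term $r \in \Tm$, and every formula $F \in \LJUP$, if $V(\omega, r:F) = 1$ then $E(\omega, r) \subseteq [[F]]_\model$. I would prove this by induction on the length $\rk(r)$ of the term $r$, with $F$ universally quantified so that it may change between the statement and its applications. The lemma is then the instance $r = s \cdot_A t$, $F = B$. Generalizing from complex terms to arbitrary terms is exactly what lets the induction close, since the inductive step will refer to the immediate subterms $s$ and $t$, which may themselves be atomic.

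For the base case, suppose $r \in \ATm$. Then clause (b) of the definition of $V$ states precisely that $V(\omega, r:F) = 1$ holds if and only if $E(\omega, r) \subseteq [[F]]_\model$, so the implication is immediate. For the inductive step, let $r = s \cdot_A t$ and assume $V(\omega, s \cdot_A t : F) = 1$. By clause (c) this unfolds to $V(\omega, s:(A \to F)) = 1$ and $V(\omega, t:A) = 1$. Since $\rk(s) < \rk(s \cdot_A t)$ and $\rk(t) < \rk(s \cdot_A t)$, the induction hypothesis applies to $s$ with the formula $A \to F$ and to $t$ with the formula $A$, giving $E(\omega, s) \subseteq [[A \to F]]_\model$ and $E(\omega, t) \subseteq [[A]]_\model$.

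Next I would invoke the defining condition of a subset model, $E(\omega, s \cdot_A t) \subseteq E(\omega, s) \cap E(\omega, t) \cap \WMP$. Fixing an arbitrary $\upsilon \in E(\omega, s \cdot_A t)$, the two inclusions above yield $V(\upsilon, A \to F) = 1$ and $V(\upsilon, A) = 1$, and moreover $\upsilon \in \WMP$. It then remains to conclude $V(\upsilon, F) = 1$, which is where the factor $\cap\,\WMP$ earns its place and is the only point requiring care. If $\upsilon \in W_0$ is normal, clause (e) applied to $A \to F$ together with $V(\upsilon, A) = 1$ forces $V(\upsilon, F) = 1$. If $\upsilon$ is non-normal, then $\upsilon \in \WMP^1$, so $V = V_1$ at $\upsilon$ and $V_1$ is by the definition of $\WMP^1$ closed under modus ponens there; from $V_1(\upsilon, A) = 1$ and $V_1(\upsilon, A \to F) = 1$ we again obtain $V(\upsilon, F) = 1$. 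In either case $\upsilon \in [[F]]_\model$, hence $E(\omega, s \cdot_A t) \subseteq [[F]]_\model$, which completes the induction.

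The main obstacle is not any lengthy computation but a structural observation: one must strengthen the statement to all terms before inducting, and one must recognize that the $\cap\,\WMP$ component of the model condition is precisely what supplies modus-ponens closure at every target world $\upsilon$, including the non-normal worlds where classical reasoning is otherwise unavailable. Once that is in place, the argument is a routine unfolding of clauses (b), (c), and (e) of the valuation together with the induction hypothesis.
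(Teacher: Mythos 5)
Your proof is correct and takes essentially the same route as the paper's: both unfold clause (c) of the valuation, reduce to clause (b) at atomic terms, and use the model condition $E(\omega, s\cdot_A t)\subseteq E(\omega,s)\cap E(\omega,t)\cap \WMP$ together with modus-ponens closure at the worlds of $\WMP$, split into the normal and non-normal cases. Your reorganization as a single induction on the rank of one arbitrary term, with the formula universally quantified, is if anything slightly cleaner than the paper's induction on the structure of the pair $(s,t)$, whose appeal to the induction hypothesis tacitly relies on the same strengthening to atomic terms via clause (b).
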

\begin{proof}
	The proof is by induction on the structure of $s$ and $t$:
	\begin{itemize}
		\item base case $s,t\in\ATm$:\\
		Suppose $V(\omega, s\cdot_A t:B)=1$. Case 2c of the definition of $V$ in Definition~\ref{def:subset_model} for normal worlds yields that \[V(\omega, s:(A\to B))=1 \text{ and }V(\omega, t:A)=1.\] With case 2b from the same definition we obtain \[E(\omega, s)\subseteq[[A\to B]]_\model\text{ and }E(\omega,t)\subseteq[[A]]_\model.\] Furthermore the definition of $E$ for normal worlds guarantees that 
		\[E(\omega, s\cdot_A t)\subseteq E(\omega, s)\cap E(\omega, t)\cap \WMP.\]
		So for each $\upsilon\in E(\omega, s\cdot_A t)$ there is $V(\upsilon, A\to B)=1$ and $V(\upsilon, A)=1$ and $\upsilon\in\WMP$ and hence either by the definition of $\WMP^1$ or by case 2e of the definition of $V$ in normal worlds there is $V(\upsilon, B)=1$. Therefore $E(\omega, s\cdot_A t)\subseteq[[B]]_\model$.
		\item $s, t\in \Tm$ but at least one of them is not atomic: w.l.o.g.~suppose $s=r\cdot_C q$. Suppose $V(\omega, s\cdot_A t:B)=1$ then $V(\omega, s:(A\to B))=1$ and $V(\omega, t:A)=1$. 
		Since  $s=r\cdot_C q$ and $\omega\in W_0$ we obtain \[V(\omega, r:(C\to(A\to B)))=1\text{ and }V(\omega, q:C)=1\] and by I.H.~that
		\[		
		E(\omega, r)\subseteq[[C\to(A\to B)]]_\model \text{ and } E(\omega, q)\subseteq[[C]]_\model.
		\]
		With the same reasoning as in the base case we obtain 
		\[
		E(\omega, s)=E(\omega, r\cdot_C q)\subseteq[[A\to B]]_\model.
		\]
		If $t$ is neither atomic, the argumentation works analoguously and since we have then shown both $E(\omega, s)\subseteq[[A\to B]]_\model$ and $E(\omega, t)\subseteq[[A]]_\model$, the conclusion is the same as in the base case.
		\qedhere
	\end{itemize}
\end{proof}

\begin{remark}
The opposite direction to Lemma \ref{app_is-subset} need not hold. 
Consider a model $\model$  and a formula $s\cdot_A t:B$ with atomic terms $s$ and~$t$ such that  \[V_\model(\omega, s\cdot_A t:B)=1\] and thus also $E(\omega, s\cdot_A t)\subseteq[[B]]_\model$.
Now consider a model $\model'$ which is defined like $\model$ except that
\[
E'(\omega, s) := E(\omega, t) \quad\text{and}\quad E'(\omega, t) := E(\omega, s).
\]
We observe the following:
\begin{enumerate}
\item
We have $E'(\omega, s\cdot_A t) = E(\omega, s\cdot_A t)$ as the condition 
\[
E'(\omega, s\cdot_A t)\subseteq E'(\omega, s)\cap E'(\omega, t)\cap \WMP
\]
still holds since intersection of sets is commutative.
Therefore  \[E'(\omega, s\cdot_A t)\subseteq[[B]]_{\model'}\]  holds.
\item
However, it need not be the case that
\[
E'(\omega, s)\subseteq[[A\to B]]_{\model'} \text { and } E'(\omega, t)\subseteq[[A]]_{\model'}.
\]
Therefore $V_{\model'}(\omega, s:(A \to B))=1$ and $V_{\model'}(\omega, t:A)=1$ need not hold and thus also
$V_{\model'}(\omega, s\cdot_A t:B)=1$ need not be the case anymore.
\end{enumerate}
\end{remark}

\begin{definition}[$\CS$-Model]
Let $\CS$ be a constant specification.
A subset model $\model = (W, W_0, V_1, V_0, E)$ is called a $\CS$-subset model or a subset model for $\JUPcs$ if for all $\omega \in W_0$ and for all $(c,A) \in \CS$ we have 
\[
E(\omega,c) \subseteq [[A]]_\model.
\]
\end{definition}

We observe that updates respect $\CS$-subset models.
\begin{lemma}
Let $\CS$ be an arbitrary constant specification and let $\model$ be a 
$\CS$-subset model. We find that $\MC$ is a $\CS$-subset model for any formula~$C$.
\end{lemma}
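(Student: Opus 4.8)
The plan is to unfold the definition of a $\CS$-subset model and reduce the claim to a statement living entirely inside $\model$. Since $W_0^{\MC} = W_0$, I fix a normal world $\omega \in W_0$ and a pair $(c,A) \in \CS$ and must show $E^{\MC}(\omega,c) \subseteq [[A]]_{\MC}$. The first observation is that $c$ is a constant and is therefore distinct from every term of the form $\up(C)$, so the definition of $E^{\MC}$ yields $E^{\MC}(\omega,c) = E^{\model}(\omega,c)$; moreover, because $\model$ is a $\CS$-subset model, $E^{\model}(\omega,c) \subseteq [[A]]_\model$ is already given. Hence it suffices to compare the truth sets $[[A]]_\model$ and $[[A]]_{\MC}$.

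The heart of the argument is the claim that the update does not move the truth set of $A$ at all, i.e.\ $[[A]]_{\model'} = [[A]]_\model$ for every model $\model'$ obtained from $\model$ by a finite (possibly empty) sequence of updates, in particular for $\model' = \MC$. Two facts drive this. First, an update never alters the valuation at a non-normal world: since $V_1^{\MC} = V_1$, for $w \notin W_0$ we have $V_{\model'}(w,F) = V_1(w,F) = V_\model(w,F)$ for every formula $F$. Second, an update never alters the interpretation of a constant, because $E^{\MC}$ modifies only terms $\up(C)$, so $E^{\model'}(\upsilon, c) = E^{\model}(\upsilon, c)$ for every constant $c$.

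I would then prove truth-set invariance by induction, peeling the justifications off $A = [\tau_1]c_1 : \cdots : [\tau_n]c_n : A'$ from the inside out: set $B_n := A'$ and $B_i := [\tau_{i+1}]c_{i+1} : B_{i+1}$, and show $[[B_i]]_{\model'} = [[B_i]]_\model$ for every update $\model'$ of $\model$. In the base case $B_n = A'$ is an axiom, so its truth set is $W_0 \cup \{\, w \notin W_0 \mid V_1(w,A') = 1 \,\}$ — the first block being all normal worlds (because the axiom holds there) and the second block being update-invariant by the first fact — hence the same in $\model$ and in every update. In the inductive step, invariance at non-normal worlds is immediate, while at a normal world $\upsilon$ repeated use of the clause $V(\upsilon,[C]F) = V_{\MC}(\upsilon,F)$ reduces the truth of $B_i$ to the condition $E(\upsilon,c_{i+1}) \subseteq [[B_{i+1}]]$ evaluated in the model obtained from $\model'$ by further updating with $\tau_{i+1}$; constant-invariance rewrites the left-hand set as $E^{\model}(\upsilon,c_{i+1})$, and since that model is again an update of $\model$, the induction hypothesis rewrites the right-hand truth set as $[[B_{i+1}]]_\model$. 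Thus the condition, and with it $V_{\model'}(\upsilon, B_i)$, is independent of $\model'$. With $[[A]]_{\MC} = [[A]]_\model$ established, the chain $E^{\MC}(\omega,c) = E^{\model}(\omega,c) \subseteq [[A]]_\model = [[A]]_{\MC}$ closes the proof.

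The main obstacle is the base case, which rests on the fact that every axiom of $\JUP$ is true at every normal world of every subset model. This is precisely the axiom part of the soundness argument and must be checked scheme by scheme; the delicate cases are $(\Up)$ and $(\Pers)$, where it is exactly the intersection with $[[C]]_{\MC}$ in the definition of $E^{\MC}$ that forces $\up(C)$ to justify $C$ after an update. Once that validity fact is in hand, everything else is the bookkeeping above, and the constant and non-normal observations are routine.
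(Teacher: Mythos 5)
Your opening reduction is the right first move: since $c$ is a constant, $E^{\MC}(\omega,c)=E^\model(\omega,c)$, and the problem becomes one of comparing $[[A]]_\model$ with $[[A]]_{\MC}$. (For what it is worth, the paper states this lemma with no proof at all, so there is no official argument to compare against.) But the heart of your argument --- update-invariance of truth sets of $\CS$-formulas, resting on the claim that \emph{every axiom of $\JUP$ is true at every normal world of every subset model} --- fails, and it fails exactly at the case you flagged as delicate, $(\Pers)$. Take $W=\{\omega,\upsilon\}$, $W_0=\{\omega\}$, $V_0(\omega,P)=1$, $V_1(\upsilon,P)=0$, $V_1(\upsilon,\lnot\up(P){:}P)=1$, $E(\omega,\up(P))=\{\omega,\upsilon\}$, and $E(\omega,t)=\emptyset$ for all other terms (so the closure condition on application terms holds trivially). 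Then $[[P]]_\model=\{\omega\}$, so $\up(P){:}P$ is false at $\omega$, hence $\lnot\up(P){:}P$ is true at $\omega$ and, by stipulation, at $\upsilon$; thus $E(\omega,\up(P))\subseteq[[\lnot\up(P){:}P]]_\model$ and $\model,\omega\Vdash\up(P){:}\lnot\up(P){:}P$. In the updated model, $E^{\model^P}(\omega,\up(P))=\{\omega,\upsilon\}\cap[[P]]_{\model^P}=\{\omega\}$, so $\up(P){:}P$ becomes true at $\omega$, whence $[[\lnot\up(P){:}P]]_{\model^P}=\{\upsilon\}$ and $\model^P,\omega\not\Vdash\up(P){:}\lnot\up(P){:}P$. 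So the $(\Pers)$-instance
\[
\up(P):\lnot\up(P):P \ \to\ [P]\up(P):\lnot\up(P):P
\]
is \emph{false} at the normal world $\omega$, refuting your validity claim; moreover its truth value at $\omega$ flips (from $0$ to $1$) between $\model$ and $\model^P$, refuting the invariance $[[A']]_{\model'}=[[A']]_\model$ for axiom instances. Your one-line justification for the delicate cases --- that the intersection in $E^{\MC}$ ``forces $\up(C)$ to justify $C$'' --- does secure $(\Up)$, but for $(\Pers)$ the intersection only shrinks $E(\omega,\up(A))$; it cannot prevent $[[B]]_{\model^A}$ from shrinking when $B$ contains a negative occurrence of $\up(A)$, which is precisely the phenomenon behind \cref{l:higher:1} and the paper's closing remark that a polarity-sensitive axiomatization is needed. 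So the base case of your induction collapses, and with it the whole invariance argument; at best one could hope for a one-directional preservation statement on the worlds in $E(\omega,c)$, but that would have to treat $(\Pers)$-instances by a separate analysis, not by the soundness theorem.

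A second, smaller omission: being a $\CS$-subset model means being a \emph{subset model} satisfying the $\CS$-condition, and your proof never checks that $\MC$ still satisfies the closure requirement $E(\omega,s\cdot_A t)\subseteq E(\omega,s)\cap E(\omega,t)\cap\WMP$. This is not routine: an update shrinks $E^{\MC}(\omega,\up(C))$ to $E^\model(\omega,\up(C))\cap[[C]]_{\MC}$ while leaving $E^{\MC}(\omega,\up(C)\cdot_A t)=E^\model(\omega,\up(C)\cdot_A t)$ untouched, so if some $\upsilon\in E^\model(\omega,\up(C)\cdot_A t)$ lies outside $[[C]]_{\MC}$ the condition fails in $\MC$. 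Any complete proof of the lemma has to address this point explicitly (or argue that the valuation clauses never consult $E$ on complex terms, which is a different and weaker notion of model). In short: the skeleton (constants are untouched; compare truth sets) is fine, but both load-bearing claims --- axiom validity at normal worlds and update-invariance of axiom truth sets --- are false as stated, with $(\Pers)$ as the concrete counterexample, and the subset-model condition for $\MC$ is an unaddressed second obligation.
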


\section{Soundness}

\begin{definition}[Truth in Subset Models]\label{Fcstar-truth} Let 
\[
\mathcal{M}= (W, W_0, V_1, V_0, E)
\]
be a subset model, $\omega\in W$, and $F\in\LJUP$. We define the relation $\Vdash$ as follows:
	\[\mathcal{M}, \omega\Vdash F\quad\text{ if{f} }\quad V_\model(\omega, F)=1.\]
\end{definition}

\begin{theorem}[Soundness] 
 Let $\CS$ be an arbitrary constant specification. Let 
$
\mathcal{M}= (W, W_0, V_1, V_0, E)
$
be a $\CS$-subset model and $\omega\in W_0$.
For each formula $F\in\LJUP$ we have that
\[ 
\JUPcs\vdash F\quad\text{implies} \quad \mathcal{M}, \omega\Vdash F.
\]
\end{theorem}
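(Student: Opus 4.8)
The plan is to argue by induction on the length of a $\JUPcs$-derivation of $F$, but to prove the \emph{stronger} statement that every theorem of $\JUPcs$ is true at \emph{every} normal world of \emph{every} $\CS$-subset model. This strengthening is forced on us: the truth of $[C]G$ at $\omega$ is evaluated in the updated model $\MC$, and the axioms carry an arbitrary prefix $[\tau]$, so we cannot fix one model. The prefix is then handled once and for all. Writing $\model^\tau$ for the model obtained by successively updating $\model$ with the formulas of $\tau$, iterating case 2f of the valuation gives $V_\model(\omega,[\tau]G)=V_{\model^\tau}(\omega,G)$; and the lemma that updates preserve $\CS$-subset models, together with $W_0^{\MC}=W_0$, shows that $\model^\tau$ is again a $\CS$-subset model in which $\omega$ is still normal. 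Hence, to validate an axiom $[\tau]G$ at all normal worlds of all $\CS$-models it suffices to validate its core $G$ there; the same reduction strips the prefix off the rule (AN).

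For the base cases I would read the cores directly off the valuation clauses for normal worlds. $(\Taut)$ holds because cases 2d and 2e interpret $\lnot$ and $\to$ classically at normal worlds, so every propositional tautology evaluates to $1$. $(\App)$ is immediate from case 2c, which \emph{defines} $V(\omega,t\cdot_A s:B)=1$ to mean exactly $V(\omega,t:(A\to B))=1$ and $V(\omega,s:A)=1$. For $(\Funct)$ and $(\Norm)$ one uses that updates commute with the Boolean connectives at a normal world: since $\omega$ stays normal in $\MC$, applying cases 2d, 2e inside $\MC$ gives $V_{\MC}(\omega,\lnot A)=1$ iff $V_{\MC}(\omega,A)=0$, and $V_{\MC}(\omega,A\to B)=1$ iff ($V_{\MC}(\omega,A)=0$ or $V_{\MC}(\omega,B)=1$); unfolding $V(\omega,\lnot[C]A)$ and $V(\omega,[C]A\to[C]B)$ via cases 2f, 2d, 2e yields the identical conditions. $(\Up)$ is the design principle of the update: $E^{\model^A}(\omega,\up(A))=E^\model(\omega,\up(A))\cap[[A]]_{\model^A}\subseteq[[A]]_{\model^A}$, so case 2b gives $V_{\model^A}(\omega,\up(A):A)=1$. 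The rule (MP) preserves truth at normal worlds by case 2e, and (AN) holds because in any $\CS$-subset model $E(\omega,c)\subseteq[[A]]_\model$ whenever $(c,A)\in\CS$, which is precisely $V(\omega,c:A)=1$ by case 2b.

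The two cases with real content are $(\Indep)$ and $(\Pers)$, each needing an auxiliary lemma about how an update moves truth values. For $(\Indep)$ I would prove, by induction on $\rk(A)$, that if $[C]A$ is up-independent then $V_{\MC}(w,A)=V_\model(w,A)$ for every world $w$, after which the axiom follows from $V(\omega,[C]A)=V_{\MC}(\omega,A)$. The key is that an update with $C$ alters $E$ only on the term $\up(C)$ at normal worlds, while up-independence guarantees $\up(C)\notin\atm(A)$ and passes the same condition to every subformula. Non-normal worlds and atoms are immediate (since $V_1$ and $V_0$ are untouched), the Boolean and justification cases follow from the induction hypothesis together with $\up(C)\notin\atm(\cdot)$, and the genuinely fiddly subcase is a nested update $A=[D]A'$, where one must compare $(\MC)^D$ with $\model^D$.

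I expect $(\Pers)$ to be the main obstacle. The natural route is a persistence lemma proved by induction on $B$: unfolding $V(\omega,[A]\up(A):B)$ via case 2b reduces the claim to showing $E^{\model^A}(\omega,\up(A))\subseteq[[B]]_{\model^A}$ from $E^\model(\omega,\up(A))\subseteq[[B]]_\model$. Since $E^{\model^A}(\omega,\up(A))=E^\model(\omega,\up(A))\cap[[A]]_{\model^A}\subseteq E^\model(\omega,\up(A))\subseteq[[B]]_\model$, everything reduces to transferring membership from $[[B]]_\model$ to $[[B]]_{\model^A}$ at the worlds that survive the intersection. At non-normal worlds this transfer is free, because $V_1$ is preserved by the update, so the whole difficulty concentrates at normal worlds, where the truth of $B$ can genuinely move. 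This is delicate precisely because truth is \emph{not} persistent for arbitrary $B$: the update shrinks $E(\cdot,\up(A))$, which can flip the value of subformulas of the shape $\up(A):(\cdots)$. The induction must therefore exploit the specific form of the formulas reached, separating the occurrences whose truth is monotone under the shrinking of the interpretation of $\up(A)$ from those that are not, and checking that this monotone behaviour is exactly what the reduction needs. Pinning this down is where the weight of the soundness proof lies.
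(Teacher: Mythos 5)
Your global architecture coincides with the paper's: the strengthening to truth at all normal worlds of all $\CS$-subset models, the elimination of the $[\tau]$-prefix by iterating case 2f together with the lemma that updates preserve $\CS$-subset models (with $W_0$ unchanged), the direct verification of $(\Taut)$, $(\App)$, $(\Funct)$, $(\Norm)$, $(\Up)$, (MP) and (AN) from the valuation clauses, and --- for $(\Indep)$ --- the induction on $\rk(A)$ showing $V_{\MC}(w,A)=V_\model(w,A)$ at every world $w$, are exactly the paper's ``as usual'' proof; indeed $(\Indep)$ is the only case the paper writes out, with the same subinduction on terms (variable/constant, $\up(D)$ with $D\neq C$, application) and the same nested-update subcase that you flag as fiddly.

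The genuine gap is $(\Pers)$: you diagnose the obstruction but never discharge it, and in fact no induction ``exploiting the specific form'' of $B$ can discharge it, because the transfer from $[[B]]_\model$ to $[[B]]_{\model^A}$ that your reduction requires is false for arbitrary $B$ --- the axiom instance itself is falsifiable. Take $B=\lnot\up(P):Q$ and the model with $W=\{\omega,\upsilon,u_1,u_2\}$, $W_0=\{\omega,\upsilon\}$, $V_0(\omega,P)=0$, $V_0(\upsilon,P)=1$, $V_1(u_1,P)=V_1(u_1,Q)=0$, $V_1(u_2,P)=V_1(u_2,Q)=1$, $E(\omega,\up(P))=\{\upsilon\}$, $E(\upsilon,\up(P))=\{u_1,u_2\}$, and $E(x,t)=\emptyset$ for complex $t$ (so the application condition holds; take $\CS=\emptyset$, which the theorem permits). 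Then $[[P]]_{\model^P}=[[P]]_\model=\{\upsilon,u_2\}$. In $\model$: since $u_1\notin[[Q]]_\model$ we get $E(\upsilon,\up(P))\not\subseteq[[Q]]_\model$, hence $\upsilon\in[[\lnot\up(P):Q]]_\model$ and $\model,\omega\Vdash\up(P):\lnot\up(P):Q$. In $\model^P$: $E^{\model^P}(\upsilon,\up(P))=\{u_1,u_2\}\cap\{\upsilon,u_2\}=\{u_2\}\subseteq[[Q]]_{\model^P}$, so $\upsilon\notin[[\lnot\up(P):Q]]_{\model^P}$, while $E^{\model^P}(\omega,\up(P))=\{\upsilon\}\cap\{\upsilon,u_2\}=\{\upsilon\}$; hence $\model^P,\omega\not\Vdash\up(P):\lnot\up(P):Q$, i.e.\ the $(\Pers)$-instance $\up(P):\lnot\up(P):Q\to[P]\up(P):\lnot\up(P):Q$ fails at the normal world $\omega$. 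The weight you felt is therefore real: the surviving world $\upsilon$ is normal and the update flips $B$ there, exactly the scenario you isolated. Note that the paper is of no help on this point --- its proof says ``as usual by induction'' and details only $(\Indep)$, silently skipping $(\Pers)$ --- so your refusal to wave the case through exposes a defect in the axiomatization rather than a missing trick on your part. A sound version of $(\Pers)$ needs a side condition on $B$ (an up-independence or polarity restriction, in the spirit of the authors' own concluding remark), under which your $(\Indep)$-style transfer lemma, combined with your correct observation that non-normal worlds transfer for free, would close the case.
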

\begin{proof}
As usual by induction on the length of the derivation of~$F$.
We only show the case where $F$ is an instance of axiom $(\Indep)$.

By induction on $[C]A$ we show that for all $\omega$
\[
\MC, \omega \Vdash A \qquad\text{if{f}} \qquad \model, \omega \Vdash A.
\]
We distinguish the following cases.
\begin{enumerate}
\item $A$ is an atomic proposition. Trivial. 
\item $A$ is $\lnot B$. By I.H.
\item $A$ is $B \to D$. By I.H.
\item $A$ is $t:B$. Subinduction on $t$:
	\begin{enumerate}
	\item $t$ is a variable or a constant. Easy using I.H. for $B$.
	\item $t$ is a term $\up(D)$. By assumption, we have that $C \neq D$. Hence this case is similar to the previous case.
	\item $t$ is a term $r \cdot_D s$. We know that $t:B$ is equivalent to 
	\[
	r: (D \to B) \land s:D.
	\]
	 Using I.H. twice, we find that
	\[
	\MC, \omega \Vdash r: (D \to B) \quad\text{and}\quad \MC, \omega \Vdash s:D
	\] 
	 if and only if
	\[
	\model, \omega \Vdash r: (D \to B) \quad\text{and} \quad \model, \omega \Vdash s:D.
	\] 
	 Now the claim follows immediately.
	\end{enumerate}
\item $A$ is $[D]B$. Making use of the fact that $A$ is up-independent, this case also follows using I.H.\qedhere
\end{enumerate}
\end{proof}

\section{Basic Properties}\label{sec:persistence:1}

We first show that first-order beliefs are persistent in $\JUP$.
Let $F$ be a formula that does not contain any justification operator. We have that 
if $t$ is a justification for $F$, then, after any update, this will still be the case.
Formally, we have the following lemma.

\begin{lemma}
For any term $t$ and any formulas $A$ and $C$ we have that
if\/ $A$ does not contain a subformula of the form $s:B$, then 
\[
t:A \ \to\ [C]t:A
\]
is provable.
\end{lemma}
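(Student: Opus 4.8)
The plan is to argue by induction on the structure of the term $t$. Throughout I would lean on one standing observation: since $A$ contains no subformula of the form $s:B$, an easy induction using Definition~\ref{subterm} gives $\atm(A)=\emptyset$. Hence $A$ is up-independent, and by $(\Indep)$ its truth is unaffected by any update; this is exactly what lets me push the modality $[C]$ harmlessly past $A$ in every case below.

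First I would dispose of the atomic base cases. If $t$ is a variable or a constant, then $\up(C)\notin\atm(t)$, and together with $\atm(A)=\emptyset$ this makes $[C](t:A)$ up-independent; so $(\Indep)$ with empty $\tau$ yields even the stronger equivalence $t:A\leftrightarrow[C]t:A$, and the implication follows. If $t=\up(D)$, I split on whether $D=C$. When $D=C$, the target $\up(C):A\to[C]\up(C):A$ is literally an instance of $(\Pers)$. When $D\neq C$ and $\up(C)\notin\atm(D)$, the formula $[C](\up(D):A)$ is again up-independent and $(\Indep)$ applies as before.

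For the induction step let $t=r\cdot_E s$. Using $(\App)$ with empty $\tau$, the antecedent $r\cdot_E s:A$ is equivalent to $r:(E\to A)\land s:E$, while Lemma~\ref{l:aux:1} rewrites the consequent $[C]\,r\cdot_E s:A$ as $[C]r:(E\to A)\land[C]s:E$. Since $[C]$ is a normal modality (by $(\Norm)$, $(\Funct)$ and Lemma~\ref{l:upNec:1}), the propositional bookkeeping that assembles the two conjuncts is routine, so the step reduces to propagating persistence to the two immediate subterms $r$ and $s$.

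I expect the genuine obstacle to sit precisely in this application step, and more sharply in the annotation $E$. The inductive hypothesis supplies persistence only for justification-free formulas, whereas $E$ (and hence $E\to A$) may well contain subformulas $u:B$, so the hypothesis does not apply off the shelf to $s:E$ and $r:(E\to A)$. The crux is therefore to show that the decomposition through Lemma~\ref{l:aux:1} does not actually demand persistence of an arbitrary higher-order belief: one must track that, on passing to $\MC$, only atomic $\up(C)$-terms have their interpretation altered, so that the interpretation of the complex term $r\cdot_E s$ is insensitive to the update and the argument should be carried at that level rather than by naively recursing on $s:E$. Resolving this, together with the parallel subcase $t=\up(D)$ with $D\neq C$ but $\up(C)\in\atm(D)$, where $(\Indep)$ is unavailable even though the interpretation of $\up(D)$ is still untouched, is where the real work of the proof lies.
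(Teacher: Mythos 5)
Your completed cases coincide exactly with the paper's proof: it, too, proceeds by induction on the structure of $t$, dispatches atomic $t \neq \up(C)$ via axiom $(\Indep)$, treats $t = \up(C)$ as a literal instance of $(\Pers)$, and in the application case decomposes $r\cdot_B s:A$ with $(\App)$ and reassembles under $[C]$ with Lemma~\ref{l:aux:1}. The difference is that at the point where you stop, the paper simply applies the induction hypothesis to the subgoals $s:B$ and $r:(B\to A)$ and closes the case. Judged as a proof, then, your proposal has a genuine gap: the application case is diagnosed but not resolved, and your semantic observation---that an update with $C$ reinterprets only the atomic term $\up(C)$, leaving $E(\omega, r\cdot_B s)$ untouched---is a statement about models, not a derivation in $\JUPcs$. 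Since the lemma asserts \emph{provability} and the paper has no completeness theorem, validity cannot be converted into provability; one must either recurse syntactically (as the paper does) or strengthen the axioms, and the observation alone finishes nothing.

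That said, the obstacle you isolate is real, and it is precisely what the paper's own proof passes over in silence. Recursing on $s:B$ is licensed by the stated hypothesis only when the annotation $B$ is itself free of justification subformulas, which nothing in the syntax guarantees; likewise the paper's atomic case tacitly needs $\up(C)\notin\atm(t)$, which fails in your last subcase $t=\up(D)$ with $D \neq C$ but $\up(C)\in\atm(D)$, where $(\Indep)$ is syntactically unavailable. Your suspicion about the application case is even semantically vindicated: adapting the model of Lemma~\ref{l:higher:1} by taking $t = r\cdot_B s$ with $B = \lnot\up(P):P$, $A = Q$ atomic with $V_0(\omega,Q)=1$, $C = P$, and $E(\omega,r)=E(\omega,s)=\{\omega\}$, one checks that $t:Q$ holds at the normal world $\omega$ while $[P]t:Q$ fails there, because the update with $P$ destroys $s:B$; by the soundness theorem this instance of the lemma is then not provable at all. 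So the lemma and the paper's induction are correct only under the evidently intended hereditary reading on which no subformula of the form $u:F$ occurs in $A$ \emph{nor in any annotation or $\up$-argument inside $t$}; under that reading your skeleton closes exactly as the paper's does, and without it the crux you flag is not merely ``where the real work lies'' but an actual counterexample.
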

\begin{proof}
We proceed by induction on the complexity of $t$ and distinguish the following cases:
\begin{enumerate}
\item Case $t$ is atomic and $t \neq \up(C)$. Since $A$ does not contain any evidence terms, the claim follows immediately from axiom ($\Indep$). 
\item Case $t = \up(C)$. This case is an instance of axiom ($\Pers$).
\item Case $t= r \cdot_B s$. From $r \cdot_B s : A$ we get by ($\App$)
\[
s:B \quad\text{and}\quad r:(B \to A).
\]
By I.H.~we find
\[
[C]s:B \quad\text{and}\quad [C]r:(B \to A).
\]
Using Lemma~\ref{l:aux:1} we conclude $[C] r \cdot_B s: A$.
\qedhere
\end{enumerate}
\end{proof}

Let us now investigate higher-order beliefs.
We argue that persistence should not hold in this context. 
Consider the following scenario.
Suppose that you are in a room together with other people. Further suppose that no announcement has been made in that room. Therefore, it is not the case that $P$ is believed because of an announcement. 
Formally, this is expressed by 
\begin{equation}\label{eq:notPers:1}
\lnot \up(P):P.
\end{equation}
We find that
\begin{equation}\label{eq:notPers:2}
\text{the fact that you are in that room} 
\end{equation}
justifies your belief in~\eqref{eq:notPers:1}.
Let the term $r$ represent~\eqref{eq:notPers:2}. Then we have
\begin{equation}\label{eq:notPers:3}
r : \lnot \up(P):P.
\end{equation}
Now suppose that $P$ is publicly announced in that room.
Thus we have in the updated situation
\begin{equation}\label{eq:notPers:4}
\up(P):P.
\end{equation}
Moreover, the fact that you are in that room justifies now your belief in~\eqref{eq:notPers:4}. Thus we have 
$
	r:\up(P):P
$
and hence in the original situation we have
\begin{equation}\label{eq:notPers:5}
[P]r : \up(P):P
\end{equation}%
and \eqref{eq:notPers:3} does no longer hold after the announcement of $P$.

The following lemma formally states that
persistence fails for higher-oder beliefs.

\begin{lemma}\label{l:higher:1}
There exist formulas $ r: B$ and $A$ such that
\[
r:B \to [A] r: B
\]
 is not provable.
\end{lemma}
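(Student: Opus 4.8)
The plan is to prove the lemma by exhibiting a concrete counterexample, guided directly by the informal room-announcement scenario that precedes the statement. I would take $B$ to be $\lnot \up(P):P$ and $A$ to be $P$, and I would look for a term $r$ (a fresh constant will do) together with a subset model in which $r : B$ holds at a normal world $\omega$ but $[A]r:B$ fails there. By the Soundness Theorem, producing such a model suffices: if $r:B \to [A]r:B$ were provable, it would be true at $\omega$, contradicting the model. So the whole argument reduces to a careful semantic construction.

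First I would unfold what the two formulas demand. By case 2b of the valuation, $\model,\omega \Vdash r : \lnot\up(P):P$ means $E(\omega, r) \subseteq [[\lnot\up(P):P]]_\model$, i.e.\ every world in $E(\omega,r)$ falsifies $\up(P):P$. Meanwhile $[A]r:B = [P]r:\lnot\up(P):P$ is evaluated in the updated model $\model^P$ by case 2f, so $\model,\omega \Vdash [P]r:B$ iff $E^{\model^P}(\omega,r) \subseteq [[\lnot\up(P):P]]_{\model^P}$. Since $r$ is a constant, $E^{\model^P}(\omega,r) = E^{\model}(\omega,r)$, so the left-hand set is unchanged; the work is entirely in showing that the truth-set on the right shrinks under the update. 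Concretely, I want a world $\upsilon \in E(\omega,r)$ at which $\up(P):P$ is \emph{false} in $\model$ but \emph{true} in $\model^P$. By the definition of $E^{\model^P}$, the interpretation of $\up(P)$ is intersected with $[[P]]_{\model^P}$, so after the update $E^{\model^P}(\omega',\up(P)) \subseteq [[P]]$, which forces $\up(P):P$ to become true at any normal world $\omega'$ — this is exactly the content of axiom $(\Up)$. Thus the natural choice is to make $\upsilon$ itself a normal world where $\up(P):P$ fails before the update, and to place $\upsilon$ in $E(\omega,r)$.

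So the concrete construction I would write down is a small model: take $W_0 = \{\omega, \upsilon\}$ (both normal), set $E(\omega,r) = \{\upsilon\}$, and arrange $E(\omega,\up(P))$ — evaluated at $\upsilon$, i.e.\ $E(\upsilon,\up(P))$ — to contain a world outside $[[P]]$, so that $\upsilon \not\Vdash \up(P):P$ in $\model$. Then $E(\omega,r) = \{\upsilon\} \subseteq [[\lnot\up(P):P]]_\model$, giving $\omega \Vdash r:B$. After updating with $P$, the interpretation $E^{\model^P}(\upsilon,\up(P))$ is cut down to lie inside $[[P]]_{\model^P}$, which makes $\upsilon \Vdash \up(P):P$ in $\model^P$, hence $\upsilon \not\Vdash B$ in $\model^P$, so $E^{\model^P}(\omega,r)=\{\upsilon\} \not\subseteq [[B]]_{\model^P}$ and $\omega \not\Vdash [P]r:B$. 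I must also check that the model respects all constraints of Definition~\ref{def:subset_model} and is a $\CS$-model for whatever $\CS$ I fix (easiest: take $\CS=\emptyset$, or verify the constant axioms explicitly), and that the valuation is consistent at both normal worlds.

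The main obstacle I anticipate is bookkeeping around the subtlety that $V(\omega,\up(P):P)$ is defined using $E(\omega,\up(P))$ relative to the \emph{current} world $\omega$ via case 2b, so I have to be careful which world's evidence set I am perturbing and to confirm that the update genuinely flips the truth value at $\upsilon$ rather than leaving it unchanged; this is precisely where I need $E(\upsilon,\up(P))$ to stick out of $[[P]]$ beforehand and to be contained in it afterward. A secondary point is ensuring that introducing a world where $P$ fails does not accidentally break any constraint on $W_{MP}$ or on $E(\cdot, s\cdot_A t)$, but since my model uses no complex application terms these conditions are vacuous or immediate. Once the flip at $\upsilon$ is verified, Soundness closes the argument.
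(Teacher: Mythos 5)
Your proposal is correct and takes essentially the same approach as the paper: you choose the same witness formulas $B=\lnot\up(P):P$ and $A=P$, build a small subset model in which the update's intersection of the evidence set for $\up(P)$ with $[[P]]_{\model^P}$ forces $\up(P):P$ to become true at a normal world (flipping the truth of $B$ there), and then invoke soundness to conclude non-provability. The only immaterial difference is where the flip occurs: the paper puts it at the evaluation world itself, with $E(\omega,r)=\{\omega\}$ and a non-normal world falsifying $P$, whereas you put it at a second normal world $\upsilon\in E(\omega,r)$ — both constructions go through.
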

\begin{proof}
Let $B$ be the formula $\neg\up(P):P$ and
consider the subset model \[\mathcal{M}=(W, W_0, V_1, V_0, E)\] with 
\[
W := \{\omega, \upsilon\} \quad 
W_0 := \{\omega\}\quad
V_1(\upsilon, P)=0\quad 
V_0(\omega, P)=1
\]
and
\[
E(\omega,r)=\{\omega\}\quad 
E(\omega,\up(P))=\{\omega,\upsilon\}.
\]
Hence $[[P]]_\model = \{\omega\}$ and thus $E(\omega,\up(P)) \not\subseteq [[P]]_\model$.
Since $\omega \in W_0$, this yields $V(\omega, \up(P):P)=0$.
Again by $\omega \in W_0$, this implies \[V(\omega, \lnot\up(P):P)=1.\]
Therefore $E(\omega,r) \subseteq [[\lnot\up(P):P]]_\model$ and using $\omega \in W_0$, we get
\[\model,\omega \Vdash r:\lnot\up(P):P\].

Now consider the updated model $\model^P$.
We find that 
\begin{equation*}
E^{\model^P}(\omega,\up(P)) = \{\omega\}
\end{equation*}
and thus 
$E^{\model^P}((\omega,\up(P))) \subseteq [[P]]_{\model^P}$.
Further, using $\omega \in W_0^{\model^P}$ we get \[V_{\model^P}(\up(P):P)=1\] and thus
$V_{\model^P}(\lnot \up(P):P)=0$.
That is 
$\omega \notin [[(\lnot \up(P):P]]_{\model^P}$.
We have $E^{\model^P}(\omega,r)=\{\omega\}$ and, therefore, 
$E^{\model^P}(\omega,r) \not\subseteq [[(\lnot \up(P):P]]_{\model^P}$.

With  $\omega \in W_0^{\model^P}$ we get
$\model^P, \omega \not\Vdash r:\lnot\up(P):P$.
We conclude \[\model, \omega \not\Vdash [P]r:\lnot\up(P):P.\]
\end{proof}

Next, we show that $\JUPcs$ proves an explicit form of the Ramsey axiom 
\[
  \Box(C \to A)   \leftrightarrow    [C]\Box A
\]  
from Dynamic Doxastic Logic.
\begin{lemma}
Let the formula $[C]s:(C \to A)$ be up-independent. Then
$\JUPcs$ proves 
\begin{equation}\label{eq:ramsey:1}
s:(C \to A) \ \leftrightarrow\ [C] s\cdot_C \up(C) : A.
\end{equation}
\end{lemma}
\begin{proof}
First observe that by $(\Up)$, we have $[C]\up(C):C$.

Further, since $[C]s:(C \to A)$ is up-independent, we find by $(\Indep)$ that
\[
s:(C \to A)   \leftrightarrow  [C]s:(C \to A).
\]
Finally we obtain \eqref{eq:ramsey:1} using Lemma~\ref{l:aux:1}.
\end{proof}

Often, completeness of public announcement logics is established by showing that each formula with announcements is equivalent to an announcement-free formula. Unfortunately, this approach cannot be employed for $\JUPcs$ although \eqref{eq:ramsey:1} provides a reduction property for certain formulas of the form $[C]t:A$. The reason is the hyperintensionality of justification logic~\cite{artemovFittingBook,ExploringSM}, i.e.~justification logic is not closed under substitution of equivalent formulas. Because of this, the proof by reduction cannot be carried through in $\JUPcs$, see the discussion in~\cite{BucKuzRenSacStu10LogKCA}.

\section{Conclusion}

We have introduced the justification logic $\JUP$ for subset models with belief expansion. 
We have established basic properties of the deductive system and shown its soundness. 
We have also investigated persitence properties for first-order and higher-order beliefs.
 
The next step is, of course, to obtain a completeness result for subset models with updates. We suspect, however, that the current axiomatization of $\JUP$ is not strong enough. The proof of Lemma~\ref{l:higher:1} shows that persistence of higher-order beliefs fails in the presence of a negative occurence of an $\up$-term. Thus we believe that we need a more subtle version of axiom $(\Indep)$ that distinguishes between positive and negative occurences of terms. Introducing polarities for term occurences, like in Fitting's realization procedure~\cite{Fit05APAL}, may help to obtain a complete axiomatization.



\end{document}